 \definecolor{BLACK}{gray}{0}
 \definecolor{WHITE}{gray}{1}
 \definecolor{RED}{rgb}{1,0,0}
 \definecolor{GREEN}{rgb}{0,1,0}
 \definecolor{BLUE}{rgb}{0,0,1}
 \definecolor{CYAN}{cmyk}{1,0,0,0}
 \definecolor{MAGENTA}{cmyk}{0,1,0,0}
 \definecolor{YELLOW}{cmyk}{0,0,1,0}
\newtheorem{theorem}{Theorem}
\newtheorem{corollary}{Corollary}
\newtheorem{definition}{Definition}
\begin{document}

\title{Restricting to the chip architecture maintains the quantum neural network accuracy
}

\author{Lucas Friedrich}
\email[Electronic address: ]{lucas.friedrich@acad.ufsm.br}
\affiliation{Physics Departament, Center for Natural and Exact Sciences, Federal University of Santa Maria, Roraima Avenue 1000, 97105-900, Santa Maria, RS, Brazil}

\author{Jonas Maziero}
\email[Electronic address (corresponding author): ]{jonas.maziero@ufsm.br}
\affiliation{Physics Departament, Center for Natural and Exact Sciences, Federal University of Santa Maria, Roraima Avenue 1000, 97105-900, Santa Maria, RS, Brazil}

\selectlanguage{english}%

\begin{abstract}

In the era of noisy intermediate-scale quantum devices, variational quantum algorithms (VQAs) stand as a prominent strategy for constructing quantum machine learning models. These models comprise both a quantum and a classical component. The quantum facet is characterized by a parametrization \(U\), typically derived from the composition of various quantum gates. On the other hand, the classical component involves an optimizer that adjusts the parameters of \(U\) to minimize a cost function \(C\). Despite the extensive applications of VQAs, several critical questions persist, such as determining the optimal gate sequence, devising efficient parameter optimization strategies, selecting appropriate cost functions, and understanding the influence of quantum chip architectures on the final results.
This article aims to address the last question, emphasizing that, in general, the cost function tends to converge towards an average value as the utilized parameterization approaches a \(2\)-design. Consequently, when the parameterization closely aligns with a \(2\)-design, the quantum neural network model's outcome becomes less dependent on the specific parametrization. This insight leads to the possibility of leveraging the inherent architecture of quantum chips to define the parametrization for VQAs. By doing so, the need for additional swap gates is mitigated, consequently reducing the depth of VQAs and minimizing associated errors.
\end{abstract}

\keywords{Quantum neural networks; Variational quantum algorithms; Quantum chip architecture}

\maketitle

\section{Introduction}

Quantum machine learning is an emerging interdisciplinary area of study that involves quantum computing and machine learning \cite{biamonte,schuld}. This investigation path seeks to create a quantum machine learning model with greater computational power than its classical counterparts. For this, phenomena that are only possible to obtain in the quantum domain are used, such as entanglement and superposition. Currently, in the era of noisy intermediate-scale quantum devices, variational quantum algorithms have proved to be one of the best strategies for building quantum machine learning models. Variational quantum algorithms (VQAs) \cite{cerezo,tilly} are built using a quantum part and a classical part, as illustrated in Fig. \ref{fig:vqaFig}. The quantum part is obtained from a parameterization $U(\theta)$ composed of different quantum gates and with parameters $\theta$. In turn, the classical part refers to an optimizer whose objective is to minimize a cost function $C$. For this, the classical optimizer iteractively updates the parameters $\pmb{\theta}$ of the parameterization $U$. 

Although the number of works in this area has grown in recent years, there are still many questions and problems to be resolved. One example is the vanishing gradient, which is also known as barren plateaus \cite{BR_cost_Dependent,BR_Entanglement_devised_barren_plateau_mitigation,BR_Entanglement_induced_barren_plateaus,BR_expressibility,BR_noise,BR_gradientFree}. The problem of barren plateaus is related to the difficulty of optimizing the parameters as the system size increases. In general, in VQAs the optimization of these parameters is performed iteractively using the following rule

\begin{equation}
    \pmb{\theta}^{t+1} = \pmb{\theta}^{t} - \eta \nabla_{\pmb{\theta}}C.\label{eq:gradiente_rule}
\end{equation}

This rule tells us that for a given epoch $t+1$, the new parameters will be obtained from the difference between the parameters of the previous epoch $t$ and the gradient of the cost function. Here the term $\eta$ is called learning rate and its goal is to control the size of the step that will be taken. That is, it will weigh how much the gradient value will influence the new values of $\pmb{\theta}$. However, due to the problem of barren plateaus, we have $ \nabla_{\pmb{\theta}}C \approx 2^{-n} $, where $n$ is the number of qubits in the VQA. This means that the gradient will tend exponentially to zero as the number of qubits used grows. As a consequence, as the system size increases, the computational cost related to parameter optimization increases exponentially. Furthermore, in Ref. \cite{BR_gradientFree} it was shown that this problem is also present in gradient-free methods, that is, optimization methods where the gradient of the cost function is not used. Some methods to mitigate this problem have already been suggested \cite{friedrich,BR_initialization_strategy,BR_Large_gradients_via_correlation,BR_LSTM,BR_layer_by_layer}, but this is still an open area of study.

\begin{figure}
    \centering
    \includegraphics[scale=0.5]{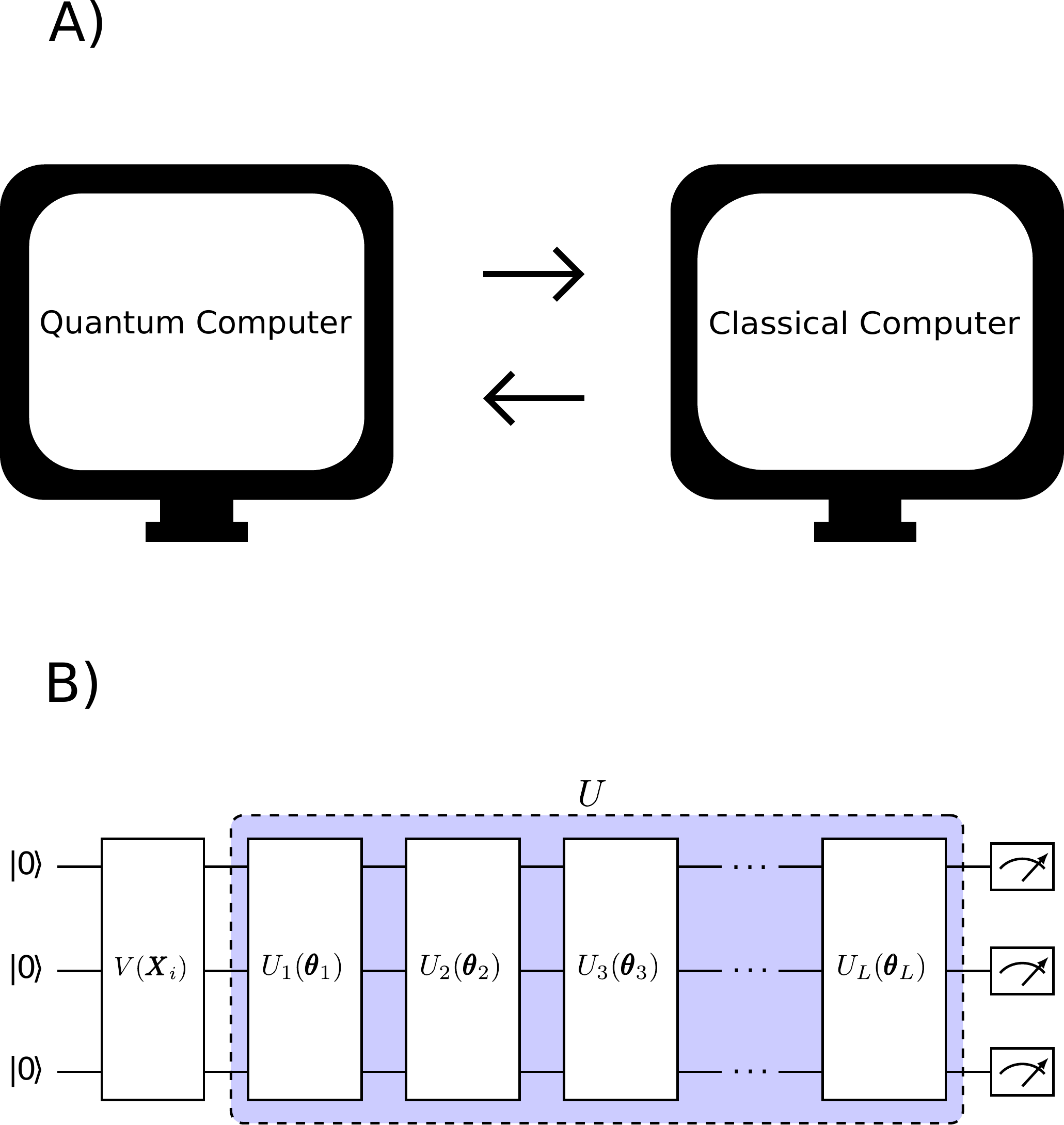}
    \caption{A) Illustration of the operation of variational quantum algorithms. These models work in a hybrid way, that is, using a quantum computer and a classical computer. In the quantum computer the quantum circuit will be run and in the classical computer the classical optimizer will be run. B) Illustration of a quantum circuit with three qubits.}
    \label{fig:vqaFig}
\end{figure}

Furthermore, problems such as parameters optimization \cite{Friedrich_nes,Rebentrost,Schuld_Evaluating}, influence of parameterization expressibility \cite{Expressibility_Sim,Hubregtsen,expr_friedrich} and influence of the architecture of the quantum chips \cite{nash,bravyi} still have to be addressed, even though some progress in this direction have already been made \cite{kandala,benedetti,nguyen,du}.


\begin{figure}[b]
    \centering
    \includegraphics[scale=0.45]{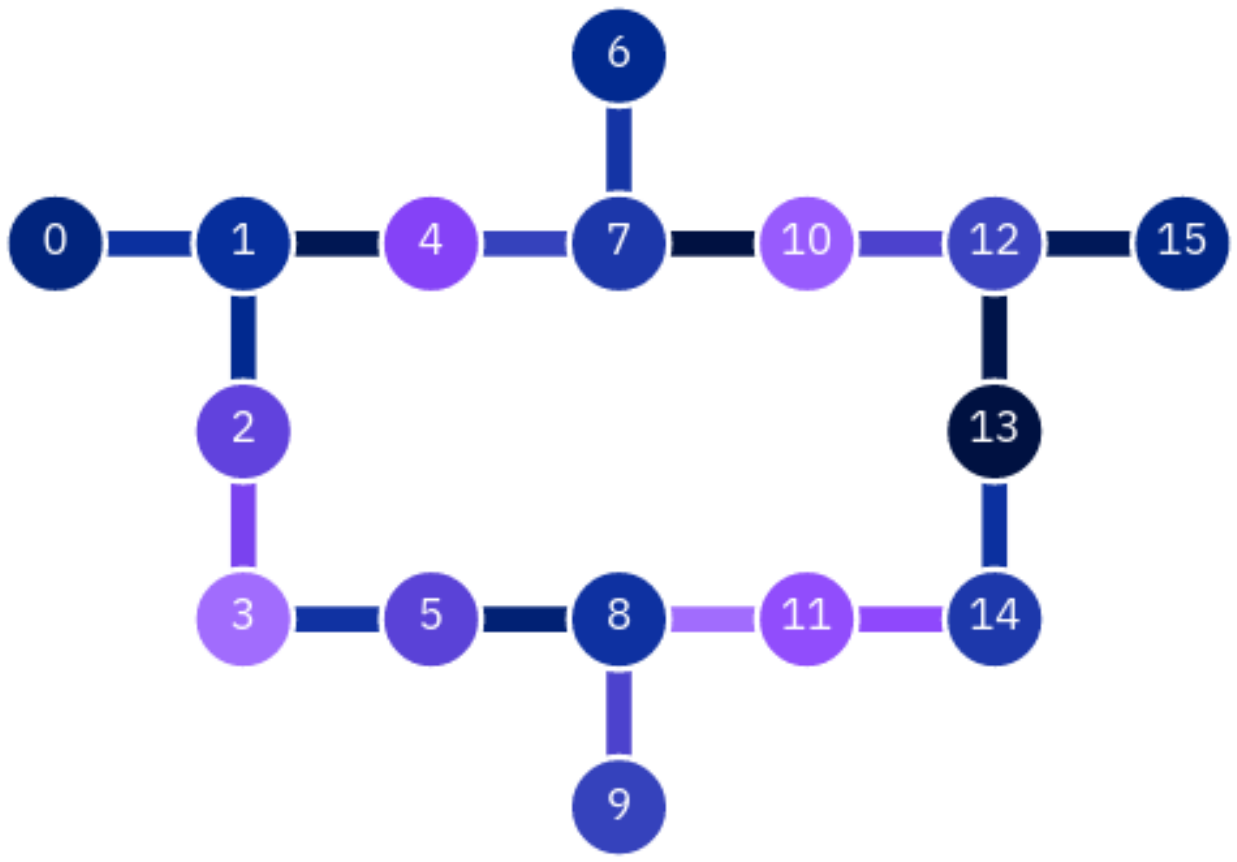}
    \caption{Illustration of IBM's Guadalupe chip connectivity architecture.}
    \label{fig:chip_guadalupe}
\end{figure}

In this article, we study how the quantum computer chip architecture influences the efficiency of a quantum neural network model. More precisely, we will analyze how the connectivity of superconducting chips affects the results obtained. 
For example, IBM quantum processors
have different chip architectures. This architecture determines which qubit pairs interact directly.
For example, in Fig. \ref{fig:chip_guadalupe} is shown one of the architectures made available by IBM. In this figure, each circle represents a qubit with its respective index, which in this case varies from 0 to 15, and the edges represent the connectivity between the qubits. Therefore, we can see that not all qubits are directly connected. For example, qubit 0 is directly connected only with qubit 1. As a consequence of this limited connectivity, if we wish, for example, to apply a CNOT gate between qubits 0 and 2, we must initially apply a SWAP gate between qubits 1 and 2 and only after that we can apply the CNOT gate between qubits 0 and 2. After that we must apply the SWAP gate again between qubits 1 and 2. Therefore, as a SWAP gate is implemented using three CNOT gates, the depth of the parameterization $U$ will increase, leading to serious problems mainly due to the noise present in current chips. For avoiding this problem, an intuitive approach is to consider parameterizations $U$ involving CNOT gates applied only to pairs of qubits that are directly connect in the quantum chip.
But this raises the question about how this will affect the efficiency of the corresponding VQA. This is the question that we investigate in this article.

The remainder of this article is organized as follows. In Sec. \ref{sec:qnn}, we describe the three main stages of a quantum neural netwok, i.e., the data encoding, parametrization and measurement stages. In Sec. \ref{sec:thm}, we present our theoretical results and briefly discuss their implications. 
In Sec. \ref{sec:sim}, we give details about our simulation method. With this, in Sec. \ref{sec:results}, we present the results obtained in our simulations. Finally, in Sec. \ref{sec:conc}, we give our concluding remarks.

\section{Quantum neural networks}
\label{sec:qnn}

Quantum neural networks are neural network models inspired by classical models. These models can be divided into two main parts, a quantum and a classical one. The quantum part is constructed as follows. Initially we use a $V$ parameterization whose objective is to prepare a quantum state. In machine learning, especially supervised learning, the goal of the model is to learn patterns from a set of training data. Thus, given a training data set $\mathcal{D} := \{ \pmb{x}_{i}, y_{i} \}_{i=1}^{D}$, where $\pmb{x}_{i}$ is the input data, for example, images of handwritten digits, the model must learn to obtain the respective labels $y_{i}$. In quantum neural networks we use the $V$ parameterization to map the data $\pmb{x}_{i}$ into a quantum state. Some examples of data coding are wave function encoding
\begin{equation}
  |\pmb{x} \rangle  := \frac{1}{\Vert \pmb{x} \Vert_{2}^{2} }\sum_{i=1}^{2^{N}}x_{i}|i \rangle,
  \label{eq:wave_function_encoder}
\end{equation}
the dense angle encoding
\begin{equation}
   | \pmb{x} \rangle = \bigotimes_{i=1}^{N/2}\big(\cos( \pi x_{2i-1} )|0\rangle  + e^{2\pi i x_{2i}}\sin( \pi x_{2i-1} )|1\rangle\big),  \label{eq:dense_angle_coding}
\end{equation}
and the qubit encoding
\begin{equation}
    |\pmb{x}\rangle = \bigotimes_{i=1}^{N}\big( \cos(x_{i})|0\rangle + \sin(x_{i})|1\rangle\big).
    \label{eq:qubit_encoding}
\end{equation}
The choice of this parameterization, as shown in Refs. \cite{expressive_SCHULD,Data_re_uploading_Salinas}, significantly affects the model and the results obtained. For example, in Ref. \cite{Data_re_uploading_Salinas} it was shown numerically that if we reload the data between the parameterization layers, the results obtained by the model are superior to those obtained when the data is loaded only at the beginning of the circuit.
After encoding the input data into quantum states, we apply a parametrization $U(\pmb{\theta})$. This parametrization is generally obtained from the composition of different quantum gates, such as CNOT, SWAP, rotation gates, etc. In a quantum neural network model, this part is the analogue of the hidden layers used in a classical neural network model. In general, we can write the parametrization as follows
\begin{equation}
    U(\pmb{\theta}) = \prod_{l=1}^{L} U_{l}W_{l},
    \label{eq:parametrization_1}
\end{equation}
where $L$ is the parametrization depth, $U_{l}$ is an arbitrary parametrization and $W_{l}$ is a ``fixed'' parameterization, that does not depend on any parameters. For example, this parametrization can be obtained from the product of applied CNOT gates in the qubits. For the purposes of this article, we define the parametrization such that
\begin{equation}
    U_{l} = \bigotimes_{j=1}^{N} R_{\sigma}(\theta_{l,j}),
    \label{eq:parametrization_2}
\end{equation}
where $ R_{\sigma}(\theta_{l,j}) = e^{-i \theta_{l,j} \sigma/2} $ with $\sigma \in \{ \sigma_{x},\sigma_{y},\sigma_{z} \} $ being one of the Pauli matrices.

Finally, the third part that makes up the quantum circuit is the measurement. There are different ways of defining these measurements, and their choice has great relevance to the problem. For example, in Ref. \cite{BR_cost_Dependent} it was shown that the phenomenon of barren plateaus is related to the choice of these measurements. However, in general, we can define the measurement as
\begin{equation}
    f(x,U) = Tr[U\rho U^{\dagger}H],
    \label{eq:medida}
\end{equation}
where $\rho := |\pmb{x}\rangle \langle \pmb{x}| $ and $H$ is an observable. For instance, we can use $H := |i\rangle \langle i|$, where $|i\rangle$ is one of the computational base states. Or we can utilize $H := \bigotimes_{i=1}^ {N} \sigma_{i} $ with $\sigma_{i}$ being one of the Pauli matrices applied to the qubit indexed $i$ and $N$ is the number of qubits used.


The classical part of a quantum neural network consists of a classical optimizer that uses a cost function $C$, that is defined in general as
\begin{equation}
    C = \frac{1}{D}\sum_{i=1}^{D}l( f(\pmb{x}_{i},y_{i}) ),
\end{equation}
where $f(\pmb{x}_{i})$ is the label predicted by the model, $y_{i}$ is the target label and $l(.,.)$ is a function that quantifies the difference between $f(\pmb{x}_{i})$ and $y_{i}$.
Different optimization proposals have already been made \cite{Friedrich_nes,Rebentrost,Schuld_Evaluating}. However, as in classical neural networks, this optimization is generally done using the rule given by Eq. \eqref{eq:gradiente_rule}.
Thus, the training of a quantum neural network model is carried out as follows. At $t=0$ the parameters $\pmb{\theta}$ that will be used in the U parameterization are randomly generated from a probability distribution, for example the Gaussian distribution or the uniform distribution. Afterwards, the training data $\mathcal{D}$ is passed to the model. Using the predicted labels $f(\pmb{x}_{i})$ and the target labels $y_{i}$ we calculate the cost function. The objective is for the model to be able to predict the labels $y_{i}$. However, initially the labels predicted by the model will be different from the target labels. In this case, we use the gradient of the cost function to update the $U$ parameters. After updating the parameters, the $\mathcal{D}$ data is passed back to the model and the cost function is calculated again and the $U$ parameters are optimized. This training process is repeated $t=T$ times.

Several models have already been proposed in the literature. Quantum multilayer perceptron \cite{quantum_multilayer_perceptron}, quantum convolutional neural networks \cite{S_J_Wei}, quantum kernel models \cite{kernel_methods} and hybrid quantum classical neural networks \cite{J_Liu} are some of them. However, all these models follow the same operating principle, that is, training data is passed to a quantum circuit that runs on a quantum computer and a classical optimizer is used to update the model parameters in order to minimize the cost function. 


\section{On the effect of the variational quantum circuit parametrization}
\label{sec:thm}

Our main objective here is to answer the question: Does the architecture of the quantum chips significantly influence the efficiency of a quantum machine learning model? As we discussed in the previous section, the parametrization in general can be written in the form given by Eq. \eqref{eq:parametrization_1}, with each $U_{l}$ given by Eq. \eqref{eq:parametrization_2}. Initially, we observe that the parametrization $U_{l}$ given by Eq. \eqref{eq:parametrization_2} does not depend on the architecture of the chip that will be used, since each one of its quantum gates acts on a single qubit. In fact, any gate on a qubit is independent of the chip architecture. However, the parameterization $W_{l}$, obtained when using gates that do not depend on the parameters, as for example the CNOT, SWAP, CZ, and CY gates, will depend on the architecture of the quantum chip.

As we mentioned previously, the chip architecture will influence our parameterization. Depending on the architecture, the depth of the parameterization will be seriously influenced due to the need to apply several SWAP ports. However, although we previously discussed this issue only for CNOT gates, this same analysis applies to any gate that acts on more than one qubit. In fact, the more restricted the chip connectivity is, the deeper the parameterization will be. 
Thus, we are naturally led to ask whether for quantum machine learning models it is indeed necessary to make use of these non-parameterized quantum gates involving qubits that do not interact directly. To carry out this analysis, we begin by defining the cost function as
\begin{equation}
    C = \frac{1}{D}\sum_{i=1}^{D}( f(x_{i},U) - y_{i} )^{2},\label{eq:loss_function}
\end{equation}
where $D$ is the size of our training dataset, $f(x_{i},U)$ is the quantum model output defined in Eq. \eqref{eq:medida} with the parametrization $U$ given by Eq. \eqref{eq:parametrization_1} and $y_{i}$ is the desired output given the input $x_{i}$. In addition, to prove the results in the sequence we use some properties of $t$-designs, that are defined as follows.

\begin{definition}
A unitary $t$-design is defined as a finite set $\{ W_{y}\}_{y \in Y} $ (of size $|Y|$) of unitary matrices $W_{y}$ on a $d$-dimensional
Hilbert space such that for every polynomial $P(t,t)(W)$ of degree at most $t$ in the matrix elements of $W$, or of $W^{\dagger}$ \cite{Dankert_Christoph}, we have
\begin{equation}
    \frac{1}{|Y|}\sum_{y \in Y}P(t,t)(W_{y}) = \int_{U(d)} d\mu(W) P(t,t)(W),
\end{equation}
where in the right-hand side $U(d)$ denotes the unitary group of degree $d$ and $\mu(W)$ is the Haar measure over $U(d)$.
\end{definition}
This last equation implies that the average of $P(t,t)(W)$ is indistinguishable from the integral over $U(d)$ with an uniform-Haar measure.
For more details we suggest Ref. \cite{Symbolic_integration}. With this we can state the following theorem.

\begin{theorem}
\label{tr:1}
Let the cost function be defined in Eq. \eqref{eq:loss_function}, let the training dataset be $\mathcal{D} := \{x_{i},y_{i}\}_{i=1}^{D}$, and let the parametrization be given by Eq. \eqref{eq:parametrization_1}. We have that the average value of the cost function over the parametrizations $U$ will be limited by
\begin{equation}
     \mathbb{E}_{U}[C] \geqslant  \frac{1}{D}\sum_{i=1}^{D} Var_{U}[f(x_{i},U)]. \label{eq:med_1}
\end{equation}
\end{theorem}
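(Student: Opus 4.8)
The plan is to recognize the claimed bound as nothing more than a term-by-term bias-variance decomposition of the mean-squared cost, so no $t$-design property is actually needed at this stage. First I would use linearity of the expectation $\mathbb{E}_U$ together with the fact that each label $y_i$ is a deterministic constant (independent of the draw of $U$) to write
\begin{equation}
\mathbb{E}_{U}[C] = \frac{1}{D}\sum_{i=1}^{D} \mathbb{E}_{U}\!\left[(f(x_i,U)-y_i)^2\right],
\end{equation}
which reduces the problem to showing, for each fixed data point $i$, the lower bound $\mathbb{E}_{U}[(f(x_i,U)-y_i)^2] \geqslant Var_U[f(x_i,U)]$ and then summing.

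Next, fixing $i$ and abbreviating $f_i := f(x_i,U)$, I would expand the square and push $\mathbb{E}_U$ through linearly,
\begin{equation}
\mathbb{E}_{U}[(f_i-y_i)^2] = \mathbb{E}_{U}[f_i^2] - 2 y_i\,\mathbb{E}_{U}[f_i] + y_i^2.
\end{equation}
Then I would substitute the definition of the variance in the form $\mathbb{E}_U[f_i^2] = Var_U[f_i] + (\mathbb{E}_U[f_i])^2$ and complete the square in the terms involving $y_i$, which yields the exact identity
\begin{equation}
\mathbb{E}_{U}[(f_i-y_i)^2] = Var_U[f_i] + \big(\mathbb{E}_U[f_i] - y_i\big)^2 .
\end{equation}

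Finally, since the squared-bias term $(\mathbb{E}_U[f_i]-y_i)^2$ is manifestly non-negative, dropping it gives the per-term bound $\mathbb{E}_{U}[(f_i-y_i)^2] \geqslant Var_U[f_i]$; averaging over $i$ recovers Eq.~\eqref{eq:med_1}. I do not expect a genuine obstacle here: the argument is elementary and, importantly, holds for \emph{any} distribution over the parametrizations $U$, not only for $2$-designs. The only point requiring care is the bookkeeping that $y_i$ must be treated as a fixed target independent of $U$ so that it commutes with $\mathbb{E}_U$; the substantive use of the $t$-design definition introduced above will presumably enter only afterwards, when one wants to evaluate $Var_U[f(x_i,U)]$ explicitly and show it collapses toward a parametrization-independent value as $U$ approaches a $2$-design.
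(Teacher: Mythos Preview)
Your proposal is correct and is essentially the paper's own proof: the paper adds and subtracts $\mu := \mathbb{E}_U[f(x_i,U)]$ inside the square, observes the cross term $\mathbb{E}_U[2(f_i-\mu)(\mu-y_i)]$ vanishes, and obtains the same identity $\mathbb{E}_U[(f_i-y_i)^2] = Var_U[f_i] + (\mu-y_i)^2$ before dropping the non-negative bias term. Your remark that no $t$-design property is used here is also accurate; the paper invokes it only in the subsequent Corollary to evaluate $Var_U[f(x_i,U)]$ explicitly.
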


\begin{proof}
To prove this theorem, we start by rewriting the cost function in Eq. \eqref{eq:loss_function} as
\begin{equation}
     C = \frac{1}{D}\sum_{i=1}^{D}( f(x_{i},U) - \mu + \mu - y_{i} )^{2},
     \label{eq:loss_function_2}
\end{equation}
where $\mu := \mathbb{E}_{U}[f(x_{i},U)] $. So we have
\begin{equation}
\begin{split}
    \mathbb{E}_{U}[C] =& \frac{1}{D}\sum_{i=1}^{D} \mathbb{E}_{U}[( f(x_{i},U) - \mu + \mu - y_{i} )^{2}] \\
                        =& \frac{1}{D}\sum_{i=1}^{D} \mathbb{E}_{U}[ ( f(x_{i},U) - \mu)^{2} + \\
                        & + 2( f(x_{i},U) - \mu)(\mu - y_{i} ) + (\mu - y_{i}  )^{2}].
                        \label{eq:loss_function_3}
\end{split}
\end{equation}
Since $\mathbb{E}_{U}[2( f(x_{i},U) - \mu)(\mu - y_{i} )] = 0 $, then Eq. \eqref{eq:loss_function_3} reduces to
\begin{align}
    \mathbb{E}_{U}[C] &= \frac{1}{D}\sum_{i=1}^{D} \mathbb{E}_{U}[ ( f(x_{i},U) - \mu)^{2}  + (\mu - y_{i}  )^{2}] \nonumber \\
 & = \frac{1}{D}\sum_{i=1}^{D} \mathbb{E}_{U}[ ( f(x_{i},U) - \mu)^{2}]  + \mathbb{E}_{U}[(\mu - y_{i}  )^{2}] \nonumber \\
    & \geqslant \frac{1}{D}\sum_{i=1}^{D} Var_{U}[f(x_{i},U)].
    \label{eq:loss_med_1}
\end{align}
With this we complete the proof of Theorem \ref{tr:1}.
\end{proof}

From this Theorem, we can prove the following corollary.
\begin{corollary}
\label{cor:1}
Let the cost function be defined by Eq. \eqref{eq:loss_function}, let the training dataset be $\mathcal{D} := \{x_{i},y_{i}\}_{i=1}^{D}$, and let the parametrization to be given by the Eq. \eqref{eq:parametrization_1}. Then the deviation of the cost function is bounded as follows:
\begin{equation}
      |C-\mathbb{E}_{U}[C]| \leqslant \frac{d+1}{d^{2}(d^{2}-1)}\bigg( |Tr[H]^{2}| + d|Tr[H^{2}]| \bigg) + |C|,
\end{equation}
with $d=2^{N}$, where $N$ is the number of qubits used in the VQA.
\end{corollary}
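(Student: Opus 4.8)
The plan is to peel off the additive $|C|$ with a single triangle inequality and then reduce the problem to estimating the variance of the model output, which the $2$-design hypothesis renders computable in closed form. Concretely, I would first write, using the identity established inside the proof of Theorem \ref{tr:1}, that $\mathbb{E}_{U}[C]=\frac{1}{D}\sum_{i=1}^{D}Var_{U}[f(x_{i},U)]+\frac{1}{D}\sum_{i=1}^{D}(\mu-y_{i})^{2}$, and then split the deviation as
\begin{equation}
|C-\mathbb{E}_{U}[C]|\leqslant\left|C-\tfrac{1}{D}\sum_{i=1}^{D}(\mu-y_{i})^{2}\right|+\tfrac{1}{D}\sum_{i=1}^{D}\left|Var_{U}[f(x_{i},U)]\right|.
\end{equation}
The first term is bounded by $|C|$ (this is the purely book-keeping step, where the nonnegative data-dependent bias is absorbed together with the cost itself), and since an average of quantities each majorized by a common constant is majorized by that same constant, it remains to bound a single $|Var_{U}[f(x,U)]|$ by $\frac{d+1}{d^{2}(d^{2}-1)}(|Tr[H]^{2}|+d|Tr[H^{2}]|)$. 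That variance bound is the entire substance of the corollary.

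The heart of the argument, and the step I expect to be the main obstacle, is evaluating $Var_{U}[f(x,U)]=\mathbb{E}_{U}[f(x,U)^{2}]-\mu^{2}$ under the assumption that the $W_{l}$-layers drive $U$ toward a $2$-design. The first moment is easy: invariance of the Haar measure (a $1$-design already suffices) gives $\mu=\mathbb{E}_{U}[Tr[U\rho U^{\dagger}H]]=Tr[H]/d$. The second moment $\mathbb{E}_{U}[(Tr[U\rho U^{\dagger}H])^{2}]$ is the genuinely technical piece: it is a degree-$(2,2)$ polynomial in the entries of $U$, so I would expand it with the $2$-design integration formulas (the Weingarten calculus summarized in Ref. \cite{Symbolic_integration}), where the sum runs over the two permutations of $S_{2}$ with weights $1/(d^{2}-1)$ and $-1/[d(d^{2}-1)]$. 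Collecting the four resulting contractions into the scalars $Tr[\rho]=1$, $Tr[\rho^{2}]$, $Tr[H]$, $Tr[H^{2}]$ and subtracting $\mu^{2}=Tr[H]^{2}/d^{2}$, the expression should factorize neatly into
\begin{equation}
Var_{U}[f(x,U)]=\frac{\big(d\,Tr[\rho^{2}]-1\big)\big(d\,Tr[H^{2}]-Tr[H]^{2}\big)}{d^{2}(d^{2}-1)}.
\end{equation}
The care needed here is purely in the index bookkeeping of the fourth-order moment and in verifying that the $Tr[\rho]Tr[H]$ cross terms cancel the $\mu^{2}$ subtraction so that this clean product form emerges.

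With the closed form in hand, the remaining step is elementary. I would bound each factor separately by the triangle inequality: since $\rho$ is a physical state, $0\leqslant Tr[\rho^{2}]\leqslant1$, hence $|d\,Tr[\rho^{2}]-1|\leqslant d+1$, while $|d\,Tr[H^{2}]-Tr[H]^{2}|\leqslant d|Tr[H^{2}]|+|Tr[H]^{2}|$. Multiplying these two estimates gives
\begin{equation}
\left|Var_{U}[f(x,U)]\right|\leqslant\frac{d+1}{d^{2}(d^{2}-1)}\big(|Tr[H]^{2}|+d|Tr[H^{2}]|\big),
\end{equation}
which is exactly the claimed coefficient (note that this is a deliberately loose estimate, since the pure-state value $Tr[\rho^{2}]=1$ would in fact replace $d+1$ by $d-1$). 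Averaging this uniform bound over $i=1,\dots,D$ leaves it unchanged, and combining with the triangle-inequality split of the first paragraph yields the stated bound, with $d=2^{N}$. The only conceptual subtlety to flag is the treatment of the bias term $\frac{1}{D}\sum_{i}(\mu-y_{i})^{2}$, which is folded into the $|C|$ contribution rather than controlled by the $2$-design calculus.
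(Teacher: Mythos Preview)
Your approach matches the paper's in all essential respects: compute $Var_{U}[f(x,U)]$ via the $1$- and $2$-design moment formulas, then split $|C-\mathbb{E}_{U}[C]|$ by a triangle inequality into a variance piece and a $|C|$ piece. Two minor presentational differences are worth noting. First, you keep $Tr[\rho^{2}]$ general and bound $|d\,Tr[\rho^{2}]-1|\leqslant d+1$, thereby recovering the constant $d+1$ exactly as stated in the corollary; the paper instead substitutes $Tr[\rho^{2}]=1$ for pure encodings at the outset and arrives at the sharper factor $d-1$ in its final displayed line, so your parenthetical remark that pure states would replace $d+1$ by $d-1$ is precisely what the paper's proof actually yields. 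Second, where you invoke the exact identity $\mathbb{E}_{U}[C]=\tfrac{1}{D}\sum_{i}Var_{U}[f]+\tfrac{1}{D}\sum_{i}(\mu-y_{i})^{2}$ from inside the proof of Theorem~\ref{tr:1} and then absorb the bias term into $|C|$, the paper instead cites the inequality of Theorem~\ref{tr:1} to pass directly from $|C-\mathbb{E}_{U}[C]|$ to $|C-\tfrac{1}{D}\sum_{i}Var_{U}[f]|$ before applying the triangle inequality. The ``book-keeping subtlety'' you flag is therefore exactly the maneuver the paper makes as well, just phrased through the theorem's inequality rather than the underlying identity.
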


\begin{proof}
To prove this Corollary, we must first calculate the variance of Eq. \eqref{eq:medida}, which is defined as $Var_{U}[f(x_{i},U)] = \mathbb{E}_U\big[(f(x_{i},U))^{2}\big] - \big(\mathbb{E}_U [f(x_{i},U)]\big)^{2} $. 
If $U$ is $1$-design, then we have \cite{BR_cost_Dependent}
\begin{equation}
    \mathbb{E}_U[f(x_{i},U)] = \int d\mu(U)Tr[U\rho_{i} U^{\dagger}H] = \frac{Tr[H]}{d}.
    \label{eq:int_1}
\end{equation}
Above we use $Tr[\rho_{i}] = 1 \ \forall i $. 
If $U$ is a $2$-design, then it follows that \cite{BR_cost_Dependent}
\begin{align}
     \mathbb{E}_U\big[(f(x_{i},U))^{2}\big] & = \int d\mu(U)Tr[U\rho_{i} U^{\dagger}H]Tr[U\rho_{i} U^{\dagger}H] \nonumber \\
    & = \frac{Tr[H]^{2}+Tr[H^{2}]}{d^{2}-1}\bigg( 1 - \frac{1}{d}\bigg),
    \label{eq:int_2}
\end{align}
where we used $Tr[\rho_{i}^{2}]=1$,  since $\rho_i$ are pure states. So, using Eqs. \eqref{eq:int_1} and \eqref{eq:int_2} we get
\begin{equation}
    Var_{U}[f(x_{i},U)] = \frac{1-d}{d^{2}(d^{2}-1)} Tr[H]^{2} + \frac{d-1}{d(d^{2}-1)}Tr[H^{2}]. \label{eq:var}
\end{equation}

From Theorem \ref{tr:1} and from the triangle inequality for complex numbers, we obtain
\begin{align}
\big|C-\mathbb{E}_{U}[C]\big| & \le \Big|C-\frac{1}{D}\sum_{i=1}^{D} Var_{U}[f(x_{i},U)]\Big| \\
& \leqslant |C| + \Big|\frac{1}{D}\sum_{i=1}^{D} Var_{U}[f(x_{i},U)]\Big| \\
& \leqslant |C| + \frac{1}{D}\sum_{i=1}^{D} \big|Var_{U}[f(x_{i},U)]\big|.
\end{align}
Now, using the variance obtained in Eq. \eqref{eq:var}, we get
\begin{equation}
      |C-\mathbb{E}_{U}[C]| \leqslant \frac{d-1}{d^{2}(d^{2}-1)}\bigg( \big|Tr[H]^{2}\big| + d\big|Tr[H^{2}]\big| \bigg) + |C|.\label{eq:corP}
\end{equation}
With this, we complete the proof of the corollary.
\end{proof}

To analyze Eq. \eqref{eq:corP}, let us consider the following scenario. We create a quantum neural network model to deal with a binary classification problem where we previously define the number of qubits, the observable $H$ and the parameterization $U$ that will be used by the model. The left side of the inequality tells us the difference between the value of the cost function using the parameterization $U$ with the average value $C$ obtained using different parameterizations. As the average value is obtained over all possible parameterizations, this value includes the behavior of $C$ for parameterizations that do not use any restrictions as well as those that use restrictions. In turn, the right hand side of the inequality tells us an upper limit for this deviation. As the number of qubits and the observable are previously defined, it follows that the first term on the right side of the inequality is simply a constant, which we will call $\delta$. The second term that appears in this inequality is the cost function itself. For the scenario we are considering when initializing the parameters, in general, we will obtain a value $C \neq 0$, as the labels predicted by the model will be different from the true labels due to the fact that the parameters used in the parameterization were obtained randomly. However, as we train the model we expect $C$ to go to zero or close to zero. Therefore, we see that the upper bound for the difference of $C$ with its average value $\mathbb{E}_{U}[C]$ will decrease as we train the model. For the case where $\delta \rightarrow 0$ when $d \rightarrow \infty $ and $C \rightarrow 0$ when $T \rightarrow \infty$, with $T$ being the number of epochs, we can see that the right-hand side of the inequality will tend to zero. So, the difference between $C$ and $\mathbb{E}_{U}[C]$ will be zero, and therefore they will be equal. This implies that for any $U$ the behavior of $C$ will be equal to the average behavior over all possible parameterizations regardless of the parameterization $U$ used. This implies that we can use parametrizations constrained by the chip architecture, without significantly losing efficiency in theory and avoiding SWAP gates, improving thus the obtained results in practice.

At this point we must make the following observation. The inequality tells us the upper limit of the difference between $C$ and $\mathbb{E}_{U}[C] $. This does not imply that initially, given two parameterizations $U_{A}$ and $U_{B}$, the values of the cost function will be equal. To illustrate, consider that the upper limit is $0.4$ and $\mathbb{E}_{U}[C] = 0.1$, in this case we can obtain $C(U_{A}) = 0.1$ and $C(U_{B }) = 0.2$. However, as the VQAs are trained, the upper bound approaches zero and we thus have $C(U_{A}) \thickapprox C(U_{B}) $.

\clearpage
\section{Simulation method}
\label{sec:sim}

To exemplify the behavior predicted in the previous section, two sets of experiments were carried out. In the first set of experiments, the parametrization $U$, given by Eq. \eqref{eq:parametrization_1}, was generated randomly.  In this case, the parameterization was generated as follows. Firstly, we consider that all layers of the $U$ parameterization are obtained using the same set of gates, that is, $U_{l}W_{l} = U_{l'}W_{l'}$ are equal. To obtain $U_{l}$ in Eq. \eqref{eq:parametrization_1} we use the set of rotation gates obtained with $\sigma = \{ X,Y,Z \}$ where $\sigma$ was obtained randomly. In turn, to obtain $W_{l}$ we use the set of two-qubit gates $\{I, CNOT,CY,CZ \}$ where $I$ indicates that the identity gate will be applied. 
We also consider in this case that two-qubit gates can be applied to any pair of qubits. This means that they do not have to be applied only between neighboring qubits. Furthermore, the choice of the gate that will be applied between the pairs of qubits was also made randomly. For this first set of experiments, $300$ different parametrizations were generated. For each of these parametrizations, $20$ training epochs were used. We chosed this value because our objective is to analyze how the choice of the parameterization $U$ affects the result, and not to actually train the model as best as possible, in which case a higher value would be ideal. As will be clear from the results below, with this value of epochs, we were already able to carry out the desired analysis. The optimizer used was Adam \cite{ADAM_o}, with a learning rate of $0.001$. The number of qubits used was varied from $2$ to $10$. The parametrization depth was also varied, where we used $L=2$, $L=4$, and $L=6$. We utilized two observables $H$. The first being defined as $H = \mathbb{I} \otimes \sigma_{Z}$, where the $\sigma_{Z}$ Pauli matrix was always applied to the last qubit of the quantum circuit. We also used the observable $H =  \mathbb{I} \otimes |0 \rangle \langle 0|$,  measuring $|0 \rangle \langle 0|$ also on the last qubit.



For the second set of experiments, we use a fixed parameterization. That is, we use the parameterization given by Eq. \eqref{eq:parametrization_1} with each layer $U_{l}W_{l}$ given by the parameterization shown in Fig. \ref{fig:param_fixo_2}. 
We performed simulations without and with restrictions on the parametrization $U$. The restriction used was that the gates that act on more than one qubit could only be applied to qubits that were directly connected in the quantum chip. To know which qubits would be side by side, IBM's Guadalupe quantum chip was used as a model; see Fig. \ref{fig:chip_guadalupe}. For these experiments, both the number of qubits and the parametrization depth were varied, where the values used were equal to those used in the first set of experiments. Besides, the same optimizer with the same learning rate as in the previous experiments were used here. 

\begin{figure}[b]
    \centering
    \includegraphics[scale=0.4]{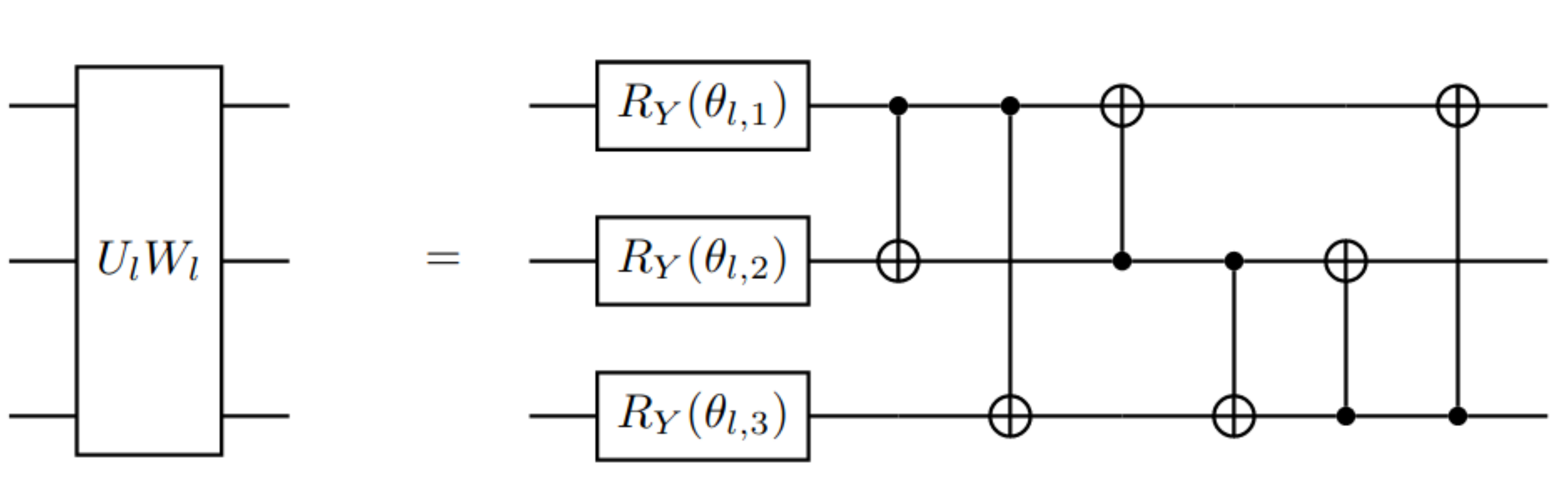}
    \caption{Illustration of the $U_{l}W_{l}$ parametrization for the case where it is not fixed by the chip architecture.}
    \label{fig:param_fixo_2}
\end{figure}

In the first set of experiments, $300$ different parametrizations were generated. We used a high number of parametrizations in order to have greater precision in the results obtained. However, in the second set of experiments, as the parametrization is fixed with the only difference being the initial parameters, we chose to do each experiment 50 times.

\begin{figure}[b]
    \centering
    \includegraphics[scale=0.9]{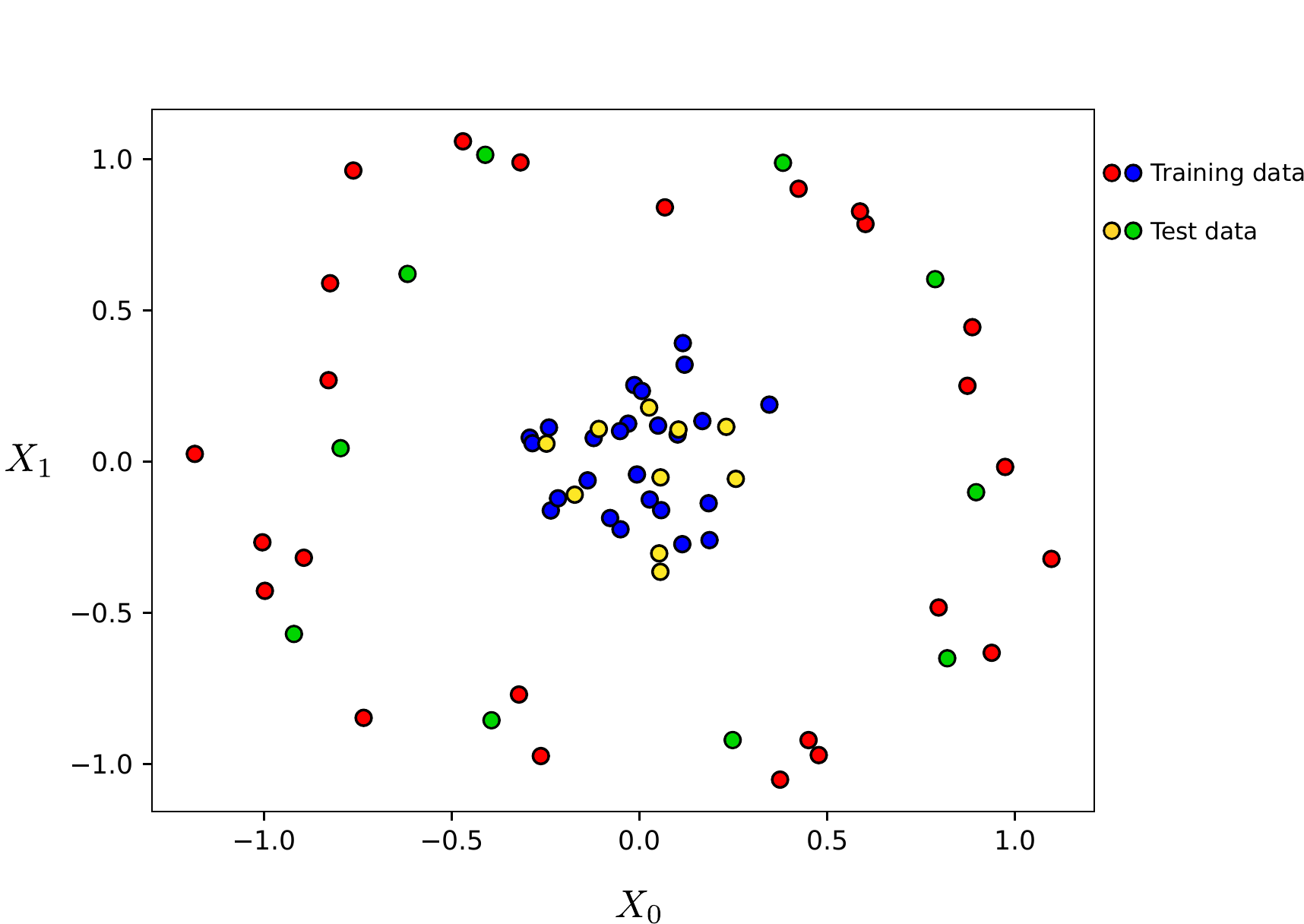}
    \caption{Example dataset used to train and test models of quantum neural networks. The red and blue dots are the training data. The yellow and green dots are test data.}
    \label{fig:data_train}
\end{figure}

Finally, the data used for training the models were obtained using the machine learning library scikit-learn \cite{Scikit_learn_ref}. In Fig. \ref{fig:data_train}, an example of training data used during these experiments is illustrated. The data is divided into training data, points in red and blue, and test data, points in yellow and purple. For this work we only use the training data, as our objective is to analyze the behavior of the cost function. In Fig. \ref{fig:codificacao_data} is illustrated how the encoding of data in a quantum state was done for all experiments. As the input data is composed of vectors of two elements, we choose to encode the two values in all qubits.

\begin{figure}
    \centering
    \includegraphics[scale=0.5]{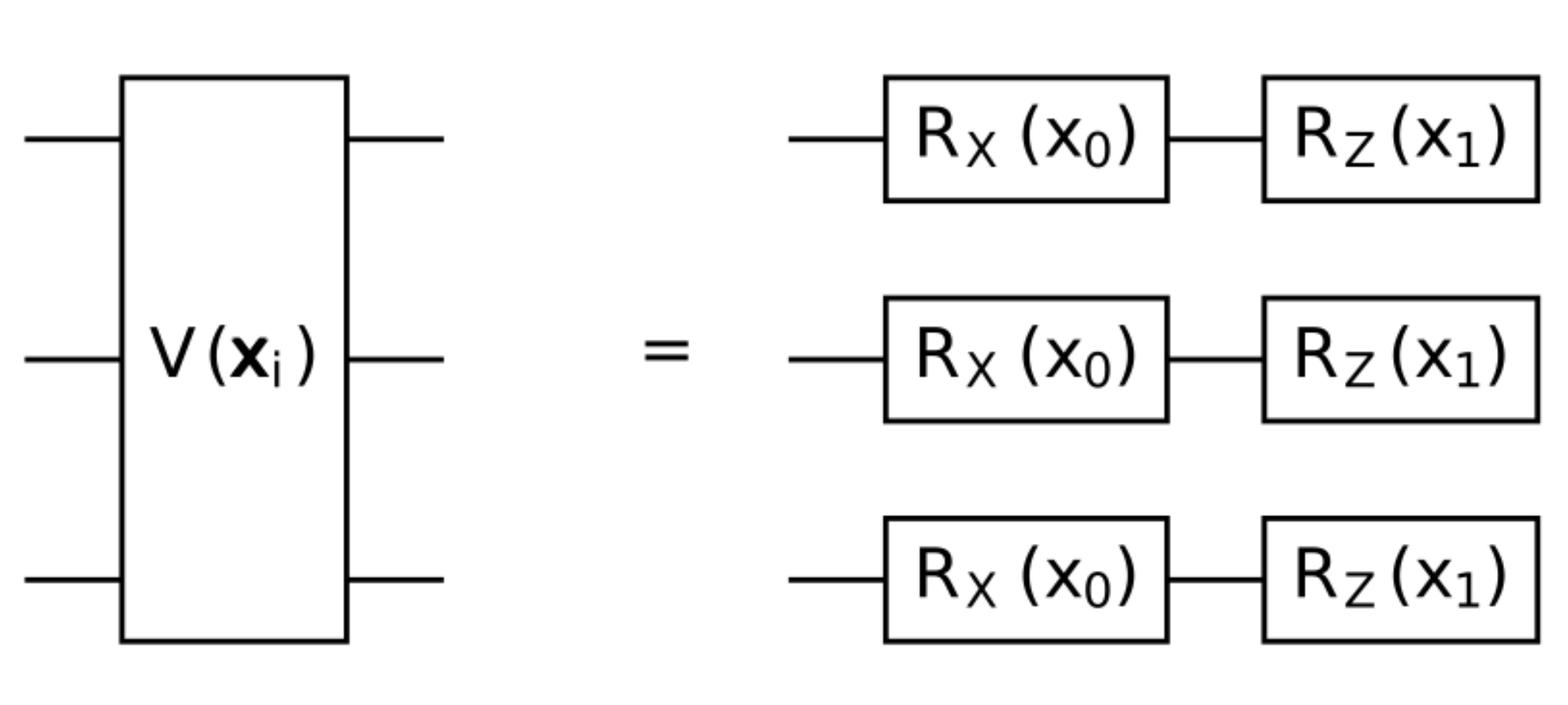}
    \caption{Illustration of the way data was encoded in a quantum state. In this example only three qubits are shown. However, the same encoding was used in all the experiments, independent of the number of qubits involved. In this example we show how input data with two elements $x_{0}$ and $x_{1}$ will be encoded in the quantum circuit.}
    \label{fig:codificacao_data}
\end{figure}



\clearpage
\section{Simulation results}
\label{sec:results}

In this section, we present the results obtained in the numerical simulations. We begin by presenting the results obtained by randomly generating the parameterization $U$, Figs. \ref{fig:Ualeatorio_O_pro} and \ref{fig:Ualeatorio_O_PauliZ}. In Fig. \ref{fig:Ualeatorio_O_pro}, we see the behavior of the cost function for the case where the chosen observable was $H = \mathbb{I} \otimes |0 \rangle \langle 0|$. In these graphs and others, we use loss as a synonym for cost. In this graph, the dark blue lines show the average behavior of the cost function in relation to the 300 parameterizations used, while the light blue lines show the maximum and minimum behavior of the cost function. As we can see, as the size of the system increases, that is, the number of qubits and parameterization depth used increases, the behavior of the cost function approaches an average behavior, especially for the case where the number of qubits $NQ=10$ was used and $L=6$. From Corollary \ref{cor:1}, we already expect this to occur as the number of qubits used increases. However, it never explicitly indicates that this behavior will also depend on the depth of the parameterization. However, we must note that the results obtained in Corollary \ref{cor:1} were only possible because we assumed that the generated parameters lead to a $t-$design. Therefore, we argue that this occurs because, as some results in the literature  indicate \cite{Expressibility_Sim}, in general, the deeper the parameterization the closer it is to a $t-$design. Thus, this behavior is also explained by the Corollary \ref{cor:1}.

In Fig. \ref{fig:Ualeatorio_O_PauliZ}, we see the behavior of the cost function where the observable used was $H = \mathbb{I} \otimes \sigma_{Z}$. Again, the dark blue lines indicate the average behavior of the cost function in relation to the 300 parameterizations, with the light blue shadow indicating the maximum and minimum behavior of the cost function. Again we see that as the system size increases the cost function approaches the average behavior. In these graphs we can also see more clearly the dependence on the cost function. As previously discussed, the concentration of the cost function around an average value will depend on its own value, where the lower its value, the closer it will be to the average value. Thus, as we can see during training, which is done with the aim of reducing the value of the cost function, the concentration of the function around the average value,  in light blue, tends to get closer to the average value.



In the second set of simulations, Figs. \ref{fig:Ufixo_O_PauliZ} and \ref{fig:Ufixo_O_prob}, we use a fixed parameterization where each layer $U_{l}W_{l}$ is obtained from the parameterization illustrated in Fig. \ref{fig:param_fixo_2}. 
Again, we use the observables $H = \mathbb{I} \otimes \sigma_{Z}$,  Fig. \ref{fig:Ufixo_O_PauliZ}, and $H = \mathbb{I} \otimes |0 \rangle \langle 0|$, Fig. \ref{fig:Ufixo_O_prob}. In both graphs we analyze the behavior of the cost function using the parameterization given by Eq. \eqref{eq:parametrization_1} with each layer $U_{l}W_{l}$ obtained from the parameterization illustrated in Fig. \ref{fig:param_fixo_2} in two different scenarios. In the first scenario we analyze the behavior of the cost function when no restrictions are imposed. In the second scenario we analyze the behavior of the cost function when we impose the restriction that CNOT gates can only be applied between pairs of qubits that are directly connected. For this we use as a reference the architecture of the Guadalupe chip shown in Fig. \ref{fig:chip_guadalupe}.

As we can see, the behavior of the cost function using the $U$ parameterization with and without restrictions is similar, especially when the size of the system used is large. Once again we observe a dependence on the depth of the parameterization. However, as we previously mentioned, this is due to the fact that the greater the depth the closer it is to a $t-$design. We also observed that the average value of the cost function differs slightly between the parameterization with and without restriction. This can be explained by the Theorem \ref{tr:1}, where we see that the average value is limited by the variance of the cost function, given by Eq. \eqref{eq:medida}, which depends on the parameterization $U$. Although we have considered that this variance is taken over all possible parameterizations, it can also be interpreted as the variance given a single parameterization $U$ where in this case the variance is computed over the set $\{ U(\pmb{\theta}_{1}),U(\pmb{\theta}_{2}), U(\pmb{\theta}_{3}),... \}$.  Therefore, it is expected that there is a difference between the average value obtained using the parameterization with and without restriction. However, this value should decrease as the system size increases, Eq. \eqref{eq:var}, a fact observed in the numerical simulation results.

\begin{figure*}
    \centering    \includegraphics[scale=0.18]{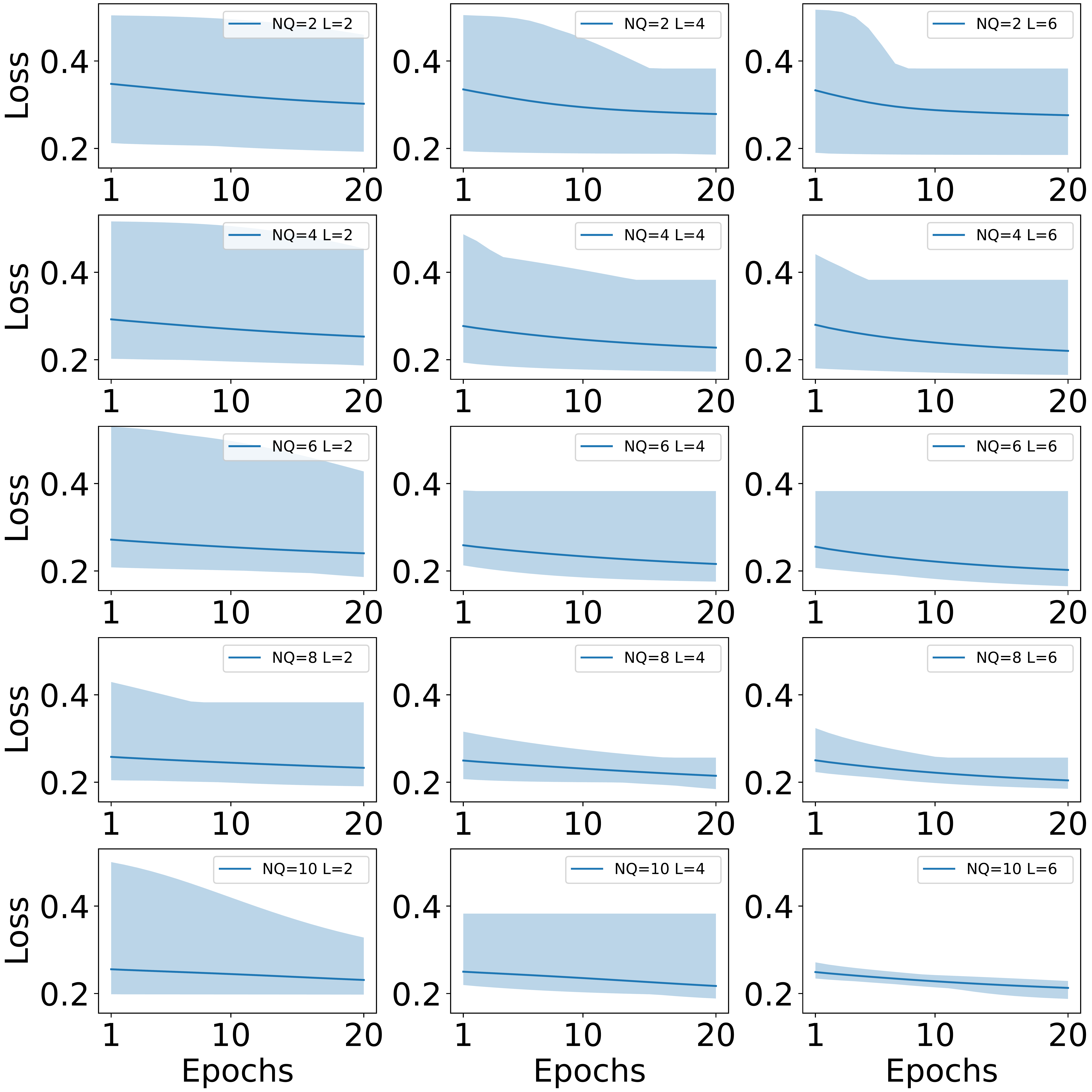}
    \caption{Behavior of cost function during training for a quantum neural network model as the number of qubits increases for different depths of the parametrization. To obtain this graph we randomly generated $300$ parameterizations. In dark blue the average behavior of the cost function during training is shown and in light blue the maximum and minimum behavior is shown. To obtain this graph using 20 epochs. The observable used here was $H = \mathbb{I} \otimes |0 \rangle \langle 0|$ .}
    \label{fig:Ualeatorio_O_pro}
\end{figure*}


\begin{figure*}
    \centering
    \includegraphics[scale=0.18]{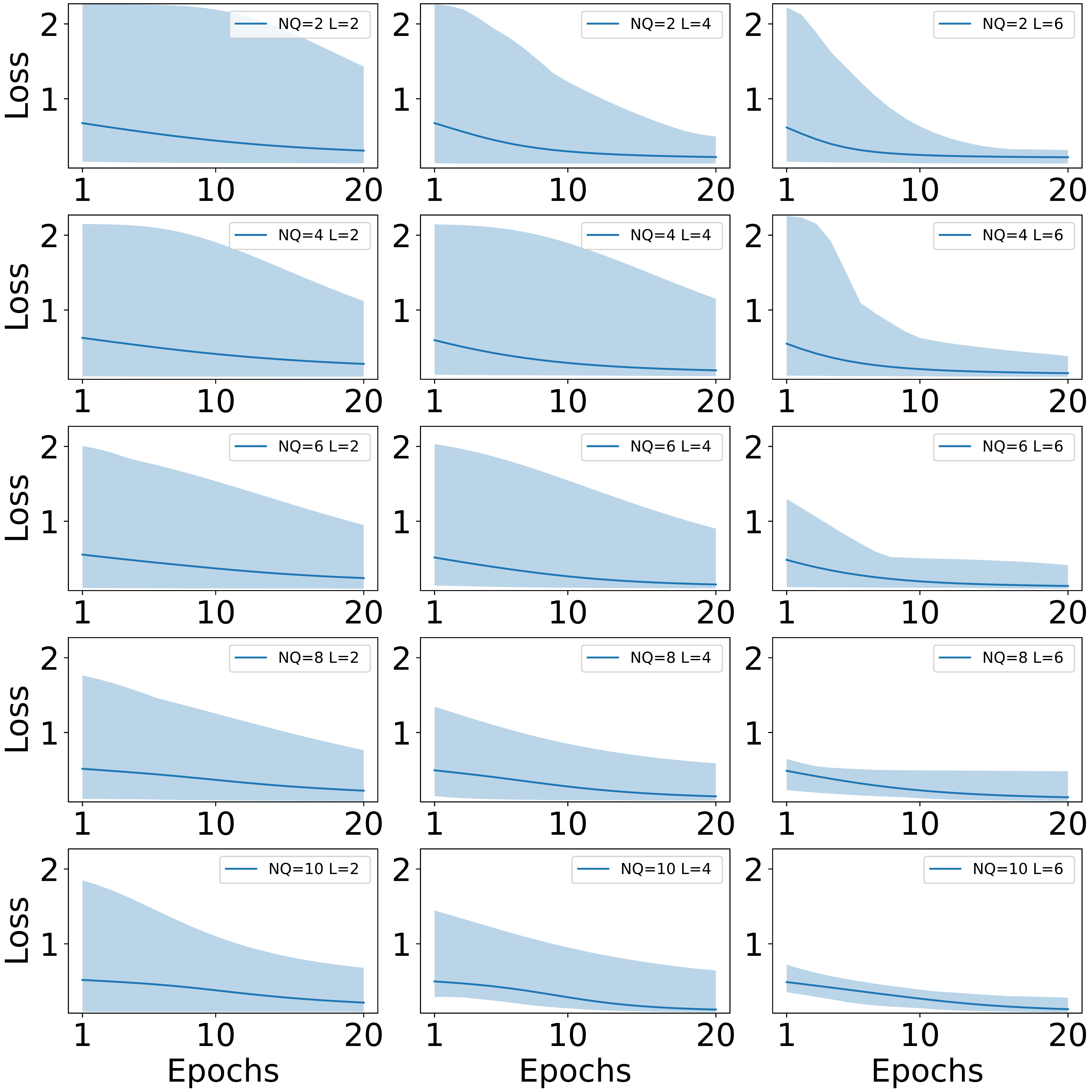}
    \caption{As in the previous graph, we analyze the behavior of the cost function as the number of qubits and depth of the parameterization vary. Again, we generate 300 parameterizations and show the average behavior, dark blue, and the maximum and minimum behavior of the cost function, light blue. The observable used here was $H = \mathbb{I} \otimes \sigma_{Z}$ .}
    \label{fig:Ualeatorio_O_PauliZ}
\end{figure*}


\begin{figure*}
    \centering    \includegraphics[scale=0.18]{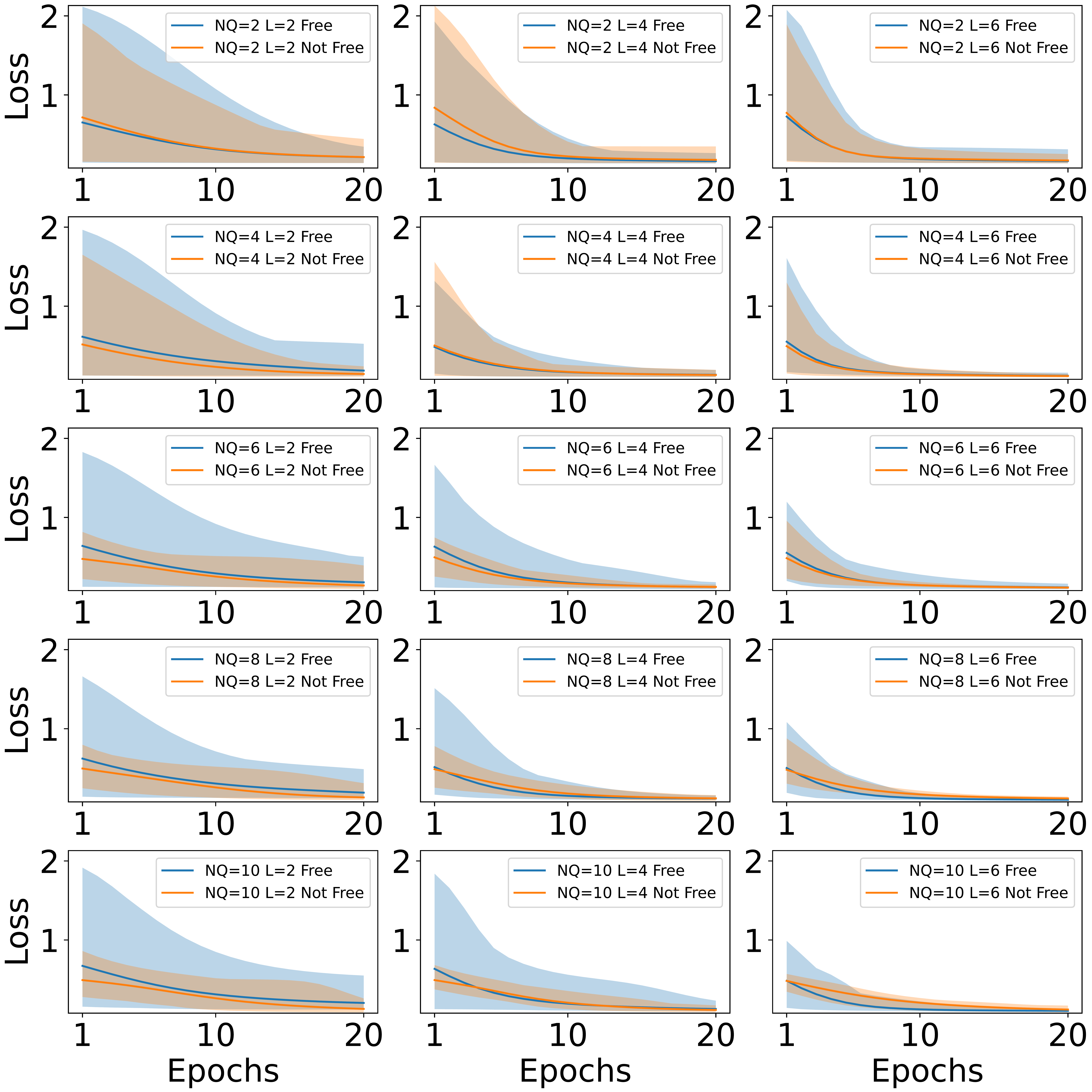}
    \caption{Behavior of a quantum neural network model during training as the number of qubits increases for different parameterization depths. In these numerical simulations, we use the parameterization shown in Fig. \ref{fig:param_fixo_2} with restrictions, which we call Not Free, and without restrictions, which we call Free. The imposed restriction is that CNOT gates can only be applied to pairs of qubits that are directly connected. To do this, we use the Guadalupe chip architecture as a reference. To train this parameterization we used $20$ epochs. For the figures, we use the observable $H = \mathbb{I} \otimes \sigma_{Z}$. For the same parameterization, we performed the simulations 50 times to analyze the parameter initialization dependence.}
    \label{fig:Ufixo_O_PauliZ}
\end{figure*}


\begin{figure*}
    \centering    \includegraphics[scale=0.18]{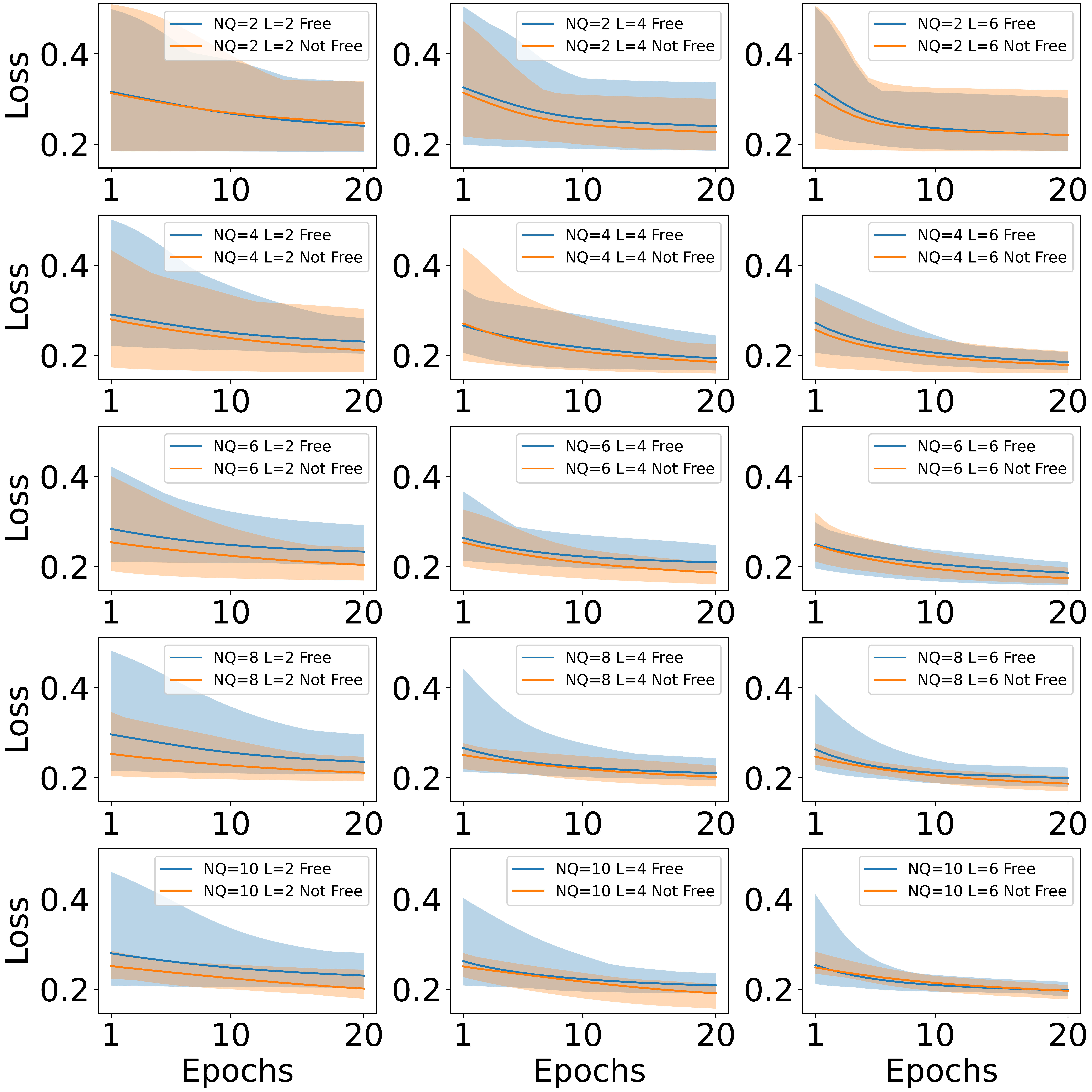}
    \caption{Behavior of a quantum neural network model during training as the number of qubits increases for different parameterization depths. In these numerical simulations, we use the parameterization shown in Fig. \ref{fig:param_fixo_2} with restrictions, which we call Not Free, and without restrictions, which we call Free. The imposed restriction is that CNOT gates can only be applied to pairs of qubits that are directly connected. To do this, we use the Guadalupe chip architecture as a reference. To train this parameterization we used $20$ epochs. For the figures, we use the observable $H = \mathbb{I} \otimes |0 \rangle \langle 0|$. For the same parameterization, we performed the simulations $50$ times to analyze the parameter initialization dependence.}    \label{fig:Ufixo_O_prob}
\end{figure*}

\clearpage

\section{Conclusions}
\label{sec:conc}

In this article, we address a fundamental query: Does the architecture of quantum chips impact the outcomes of a quantum machine learning model? To shed light on this, we derive significant theoretical insights, encapsulated in Corollary \ref{cor:1}, which suggest that the cost function $C$ tends to converge toward an average value, denoted as $E_U[C]$. This tendency is contingent upon several factors, notably the number of qubits employed in the circuit, the observable being measured, and the nature of the cost function itself. Our numerical simulations validate the predictions presented in Corollary \ref{cor:1}. Notably, we observe a discernible relationship between the depth of parameterization and the results from our numerical experiments. However, as discussed earlier, this connection arises due to the assumption made in Corollary \ref{cor:1}, considering the parameterization $U$ as a $t$-design. Prior research, Ref. \cite{Expressibility_Sim}, establishes that as the depth increases, the parameterization progressively aligns with a $t$-design. Hence, the observed behavior, though not explicitly stated in Corollary \ref{cor:1}, finds its explanation in this contextual understanding.

In Figs. \ref{fig:Ufixo_O_PauliZ} and \ref{fig:Ufixo_O_prob}, under a fixed parameterization $U$, we conducted a comparative analysis of the cost function's behavior when imposing specific restrictions on the parameterization against a scenario with no imposed restrictions. The restriction in question involved permitting $CNOT$ gates to exclusively act on physically connected qubit pairs, using IBM's Guadalupe chip architecture as a benchmark. Remarkably, our numerical findings indicated a general similarity in the behavior of the cost function, with and without imposed restrictions, thus affirming the theoretical underpinnings.

Our focus primarily rested on utilizing the Guadalupe chip architecture as a reference due to the observed behavior illustrated in Figs. \ref{fig:Ualeatorio_O_pro} and \ref{fig:Ualeatorio_O_PauliZ} with randomly generated parameterizations. Consequently, we extrapolate our conclusions to encompass various chip architectures. This includes other chip types like trapped ion chips, photonic chips, and more, acknowledging that each may introduce specific noise impacting the model. Notably, our work suggests the potential removal of gates acting on multiple qubits, particularly the controlled gates $\{CNOT,\ CY,\ CZ,\ \cdots\}$, during quantum machine learning model creation for enhanced efficiency.

It is vital to acknowledge that every chip type carries specific noise characteristics influencing model outcomes. As such, deviations from the theoretical results are expected for each chip type. However, with effective noise mitigation, we anticipate the results aligning closely with theoretical expectations. Consequently, our findings suggest that with a sufficiently expansive parameterization, adhering to chip constraints can effectively guide the creation of a quantum machine learning model.


\begin{acknowledgments}
This work was supported by the National Council for Scientific and Technological Development (CNPq), Grants No. 309862/2021-3, No. 409673/2022-6, and No. 421792/2022-1, and by the National Institute for the Science and Technology of Quantum Information (INCT-IQ), Grant No. 465469/2014-0.
\end{acknowledgments}

\vspace{0.3cm}
\textbf{Data availability.}
The numerical data generated in this work is available at \url{https://github.com/lucasfriedrich97/Restricting_to_the_chip_architecture_maintains_the_quantum_neural_network_accuracy} 

\vspace{0.3cm}

\textbf{Conflict of interest}  The authors declare that they have no known competing financial interests or personal relationships that could have appeared to influence the work reported in this paper.

\end{document}